\newtheorem{theorem}{Theorem}[section]
\newtheorem{proposition}[theorem]{Proposition}
\newtheorem{fact}[theorem]{Fact}
\newtheorem{lemma}[theorem]{Lemma}
\theoremstyle{definition}
\newtheorem{definition}[theorem]{Definition}
\newtheorem{example}[theorem]{Example}
\newtheorem{remark}[theorem]{Remark}
\numberwithin{equation}{section}
\begin{document}
\title[Cryptanalysis and Improvement of Akleylek et al.'s Cryptosystem]
{Cryptanalysis and Improvement of Akleylek et al.'s cryptosystem}

\author[R. Rastaghi]{R. Rastaghi}
\address{Department of Electrical Engineering, Aeronautical University of Sciences \& Technology, Tehran, Iran.}
\email{r.rastaghi59@gmail.com}
\subjclass[2000]{Primary: 68P25, 94A60, Secondary: 11T71.}
\keywords{Cryptography, Cryptanalysis, Ciphertext-only attack, ElGamal cryptosystem, Knapsack problem, CCA2 security, Standard model.}
\begin{abstract}
Akleylek et al. [S. Akleylek, L. Emmungil and U. Nuriyev, A modified algorithm for peer-to-peer security, {\it journal of Appl. Comput. Math.}, vol. 6(2), pp.258-264, 2007.], introduced a modified public-key encryption scheme with steganographic approach for security in peer-to-peer (P2P) networks. In this cryptosystem, Akleylek et al. attempt to increase security of the P2P networks by mixing ElGamal cryptosystem with knapsack problem. In this paper, we present a ciphertext-only attack against their system to recover message. In addition, we show that
for their scheme \textit{completeness} property is not holds, and therefore, the receiver cannot \textit{uniquely} decrypts messages. Furthermore, we also show that this system is not chosen-ciphertext secure, thus the proposed scheme is vulnerable to man-in-the-middle-attack, one of the most pernicious attacks against
P2P networks. Therefore, this scheme is not suitable to implement in the P2P networks.\\
We modify this cryptosystem in order to increase its security and efficiency. Our construction is the
efficient CCA2-secure variant of the Akleylek et al.'s encryption scheme in the standard model, the \textit{de facto} security notion for public-key encryption schemes.
\end{abstract}
\maketitle
\section{Introduction}
The use of computer network is raised day by day. This increment causes the number of nodes to increase. By increasing the client, the server becomes busy and insufficient although
the bandwidths are high enough. Moreover, since the variety of requests is increased, servers may not have data the user needs. We can overcome these obstacles by using peer-to-peer
(P2P) network. The P2P networks have become popular as a new paradigm for information exchange and are being used in many applications such as file sharing, distributed computing, video
conference, VoIP, radio and TV broadcasting. The P2P networks did not have centralized servers; some powerful nodes act as server. The fourth generation supports streams over P2P networks and each node can talk with another. In these networks, since server has been decentralized and each node can directly communicate with other nodes, management and security become a most important problem. There are several ways to make P2P networks secure. Cryptography plays the most important role in each way. Cryptography is the art of keeping the data secure from eavesdropping and other malicious activities. Therefore, cryptographic algorithms are very essential in the P2P systems since they can uniquely protect message for an individual, and verify its integrity.\\
Due to peer-relying nature of the P2P networks, they are susceptible to many general attacks. Man-in-the-middle attack is one
the most pernicious attacks against P2P networks. The man-in-the-middle attack is an indirect intrusion, where the attacker
inserts its node undetected between two nodes. It is typically used for eavesdropping a public-key encrypted conversation to retrieve, modify or cut the information being sent by adopting some strategies and tricks. Therefore, the public-key encryption (PKE) scheme must resist against this type of powerful attack. Security against adaptive chosen-ciphertext attack (i.e.,
CCA2 security) \cite{17} is the strong security notion for a PKE scheme. This notion is known to suffice for many applications of encryption in the presence of active attackers
--- a man-in-the-middle adversary --- including: secure P2P transmission, secure communication, auctions, voting schemes,
and many others. In this scenario, the adversary has seen \textit{challenge ciphertext} before having access to the decryption oracle. The adversary is not allowed to ask the decryption of
the challenge ciphertext, but can obtain the decryption of any relevant ciphertext {\it even modified ones based on the challenge ciphertext}. A cryptosystem is CCA2-secure
if the cryptanalyst fails to obtain any partial information about the plaintext relevant to the challenge ciphertext. The most cryptographic protocols cannot prevent chosen-ciphertext
attacks mounted by a man-in-the-middle adversary who has full control of the communication channel between the sender and the receiver. Indeed, design \textit{efficient} CCA2-secure
encryption scheme is a challenging problem in cryptography.

In \cite{2}, Akleylek et al. introduced‎ a modified algorithm with steganographic approach for security in the P2P networks. In this cryptosystem, Akleylek et al. attempt to increase security of the P2P system by mixing ElGamal cryptosystem~\cite{8} with knapsack problem. The knapsack problem is a decision problem which is NP-complete~\cite{11,12,13}. That is to say, this problem cannot be easily solved even using quantum computers. They use the ElGamal encryption scheme to disguise private knapsack (easy knapsack) in order to produce public key (hard knapsack). In this paper, we show that this combination leaks the security and makes the cryptosystem vulnerable to ciphertext-only attack. Any encryption scheme vulnerable to this type of attacks is considered to be completely insecure. In addition, we show that in most cases \textit{completeness} property does not holds for their system. Therefore, the receiver cannot {\it uniquely} decrypts ciphertexts. Besides, their construction is deterministic and so each message has one primage. Therefore, an attacker can simply distinguish between decryptions of the two different messages. Hereupon, this encryption scheme does not satisfies indistinguishability (a.k.a semantic security) against chosen ciphertext attack\footnote{Randomized encryption algorithm is a necessary condition for CCA2 security. Although randomness is necessary, it is not sufficient (see subsection \ref{ssec2.4}).}. Hence, in the network an attacker can apply these attacks and simply can recover plaintext from any challenge ciphertext. Thereupon, this scheme is not suitable for using in a P2P network. We propose a modification to this scheme in order to increase security, efficiency and usability for using in the P2P networks. Our construction is a CCA2-secure PKE scheme in the standard model, the \textit{de facto} security notion for PKE schemes. The main novelty is that scheme's {\it consistency} check can be directly implemented by the system
without having access to some external gap-oracle as in \cite{3,4} or using other extrinsic rejection techniques \cite{6}.
\subsection{Related works}
In 1998, Cai and Cusick \cite{5} proposed an efficient lattice-based public-key cryptosystem
with much less data expansion by mixing the Ajtai-Dwork cryptosystem \cite{1} with an additive knapsack. Recently, their cryptosystem was broken by Pan and Deng \cite{16}. They presented an iterative
method to recover the message encrypted by the Cai-Cusick cryptosystem under a ciphertext-only scenario. They also present two chosen-ciphertext attacks to get a similar private key
which acts as the real private key. In another work, with several known attacks in mind, very recently Pan et al. \cite{15} introduced a new lattice-based PKE scheme mixed with additive knapsack
problem which has reasonable key size and quick encryption and decryption. Unfortunately, their scheme was broken by Xu et al. \cite{19}. They proposed two feasible attacks on the
cryptosystem of Pan et al.; the first one is a broadcast attack assuming a single encrypted message directed towards for several recipients with different public keys,
the message can be recovered by solving a system of nonlinear equations via linearization technique. The second one is a multiple transmission attack in which a single message is
encrypted under the same public key for several times using different random vectors. In this situation, the message can be easier to recover. Very recently, Rasatghi \cite{18}
introduced an efficient PKE scheme which is robust against man-in-the-middle adversaries for the P2P networks. His scheme uses RSA cryptosystem in combination of the additive knapsack
problem. Since RSA encryption scheme is deterministic and therefore does not satisfies CCA2 security requirements, the encryption algorithm uses a new padding scheme for encoding
input messages in order to secure mixed scheme against chosen-ciphertext attack.

\textbf{Organization.} The rest of this paper is organized as follows: In the next section, we give some mathematical background and definitions. Akleylek et al.'s cryptosystem will be presented in section 3. Section 4 presents our cryptanalysis and in section 5, we modify this cryptosystem to achieve desired security i.e., CCA2-security and efficiency. Some conclusion is given in section 6.
\section{Preliminaries}
\subsection{Notation}
We will use standard notation. If $x$ is a string, then $\left| x \right|$ denotes its length. If $k \in \mathbb{N}$, then $\left\{ {0,\,1} \right\}^k$ denote the set of \textit{k}-bit strings,
$1^k$ denote a string of $k$ ones and $\left\{ {0,\,1} \right\}^*$ denote the set of bit strings of finite length. $y \leftarrow x$ denotes the assignment to \textit{y }of the value
\textit{x}. For a set $S$, $s \leftarrow S$ denote the assignment to $s$ of a uniformly random element of $S$. For a deterministic algorithm ${\mathcal A}$, we write
$x\leftarrow {\mathcal A}^{\mathcal O} (y,\,z)$ to mean that \textit{x} is assigned the output of running ${\mathcal A}$ on inputs \textit{y} and \textit{z}, with access to oracle ${\mathcal O}$.
We denote by ${\rm Pr}[E]$ the probability that the event $E$ occurs.
\subsection{Mathematical background}
\begin{definition} [\textbf{Subset sum problem} \footnote{Additive knapsack problem.}] Given a set of positive integers $(a_1, \ldots, a_n)$ and a positive integer $s$. Whether there is a subset of the $a_i$s such that their sums equal to $s$. That is equivalent to determine whether there are variables $(x_1, \ldots, x_n)$ such that
$$s=\sum_{i=1}^{n}a_ix_i, \quad x_i\in \{0, 1\}, \quad 1\leq i\leq n. $$
\end{definition}
The subset sum ($0-1$ knapsack) is a decision problem which is NP-complete. The computational version of the subset sum problem is NP-hard \cite{13}
\begin{definition}[\textbf{Super-increasing sequence}] The sequence $(a_1, \ldots, a_n)$ of positive integers is a super increasing sequence if $a_i > \sum_{j=1}^{i-1}a_j$ for all $i\geq 2$.
\end{definition}
There is an efficient greedy algorithm to solve the subset sum problem if the $b_i$s are a super-increasing sequence: Just subtract the largest possible value from $s$ and repeat. The following algorithm efficiently solves the subset sum problem for super-increasing sequences in the polynomial time.

\begin{quote}
    {\bf Algorithm 1} Solving a super-increasing subset sum problem.
\end{quote}
\begin{quote}
\begin{quote}
    {\it Input:} Super-increasing sequence $(a_1, \ldots, a_n)$ and an integer $s$ which is the sum of a subset of the $a_i$. \\
    {\it Output:} $(x_1, \dots, x_n)$ where $x_i\in \{0, 1\}$, such that $s=\sum_{i=1}^{n}a_ix_i$.
\begin{enumerate}
             \item $i\leftarrow n$
             \item While $i\geq 1$ do the following:
             \begin{enumerate}
               \item If $s\geq a_i$, then $x_i \leftarrow 1$ and $s\leftarrow s-a_i$. Otherwise $x_i\leftarrow 0$.
               \item $i\leftarrow i-1.$
             \end{enumerate}
             \item Return $(x_1, \ldots, x_n)$.
           \end{enumerate}
\end{quote}
\end{quote}
\begin{definition}[\bf {Subset product problem}\footnote{Multiplicative knapsack problem.}] A set of positive integers $(a_1, \dots, a_n)$ and a positive integer $d$ are given. Whether there is a subset of the $a_i$'s such that their product equals to $d$. That is equivalent to determine whether there are variables $(x_1, \dots, x_n)$ such that
$$d=\prod_{i=1}^{n}a_i^{x_i}, \quad x_i\in \{0, 1\}, \quad 1\leq i\leq n. $$
\end{definition}
The multiplicative knapsack (subset product) problem is a decision problem which is NP-complete \cite{11,12}. As observed in \cite{10,11,12,14}, if the $a_i$s are relatively prime, then this problem can be solved in polynomial time by factoring $d$. Their result can be summarized in the following lemma.
\begin{lemma}\label{l1}
If $(a_1, a_2, \dots, a_n)$ are relatively prime, then we can solve subset product problem in the polynomial time.
\end{lemma}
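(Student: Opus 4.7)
My plan is to exploit the coprimality hypothesis to show that, for each $i$, the value of $x_i$ is forced by the single divisibility test ``does $a_i$ divide $d$?''. I read ``relatively prime'' in the strong sense of pairwise coprime, which is how the cited references use it and which is the natural condition in the multiplicative knapsack setting, since then every prime appears in at most one $a_i$. Without loss of generality I also take each $a_i>1$, as any $a_i=1$ only contributes a coordinate whose value is irrelevant.

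The algorithm I would present is the following. For $i=1,\ldots,n$, compute $g_i=\gcd(a_i,d)$ with the Euclidean algorithm, and set $x_i\leftarrow 1$ if $g_i=a_i$ (equivalently, $a_i\mid d$) and $x_i\leftarrow 0$ otherwise. After the loop, verify $\prod_{i=1}^{n}a_i^{x_i}=d$; output $(x_1,\ldots,x_n)$ if the equality holds, and ``no solution'' otherwise. Each gcd costs time polynomial in $\log d+\log a_i$, and the final product and comparison are polynomial in the same parameters, so the whole procedure runs in polynomial time in the input length.

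For correctness, suppose $(y_1,\ldots,y_n)$ is any solution to the subset product instance. If $y_i=1$, then $a_i\mid d$ is immediate. If $y_i=0$, then $d=\prod_{j\ne i}a_j^{y_j}$, and pairwise coprimality gives $\gcd(a_i,\prod_{j\ne i}a_j^{y_j})=1$; since $a_i>1$, this forces $a_i\nmid d$. Hence $y_i=1\iff a_i\mid d$, which is exactly the rule the algorithm uses to compute each $x_i$. Therefore the solution is unique and must coincide with the vector produced by the algorithm; if no solution exists, the final equality check detects this and returns ``no solution''.

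The only delicate point is interpretive rather than mathematical. The lemma writes ``relatively prime'', which is ambiguous between ``pairwise coprime'' and ``$\gcd(a_1,\ldots,a_n)=1$''; the argument above genuinely needs the pairwise version, and without it the reduction fails (e.g.\ $a_1=a_2=2$, $a_3=3$, $d=2$ has two representations). I would therefore state this convention explicitly at the start. It is also worth noting that, despite the surrounding text's reference to ``factoring $d$'', full integer factorization of $d$ is not required: the gcd-based trial-division test above already suffices as soon as the $a_i$ are pairwise coprime.
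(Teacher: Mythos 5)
Your proposal takes essentially the same approach as the paper: decide each $x_i$ by testing whether $a_i \mid d$ via $\gcd(d,a_i)$. The paper's own proof simply asserts this dichotomy, whereas you additionally justify why $y_i=0$ forces $a_i\nmid d$ under pairwise coprimality, flag the ambiguity of ``relatively prime,'' and add a final verification step --- all sound refinements of the same argument.
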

\begin{proof}
Since the $a_i$s are relatively prime and $x_i\in \{0, 1\}$, so we have
\[
x_i=\left\{ \begin{array}{ll}
 1 \quad {\rm if} \quad {\sf gcd}(d, a_i)=a_i \\
 0\quad {\rm if} \quad {\sf gcd}(d, a_i)=1
  \end{array} \right., \quad 1\leq i\leq n
\]
Hence,
\[
x_i=\left\{ \begin{array}{ll}
 1 \quad {\rm if} \quad a_i \mid d\\
 0 \quad {\rm if} \quad a_i \nmid d
  \end{array} \right., \quad 1\leq i\leq n
\]
where {\sf gcd} means the greatest common divisor.
\end{proof}
\begin{definition}[\textbf{Discrete logarithm problem (DLP)}]
Given a prime ‎$p$‎, a generator‎ $g$ of $\mathbb{Z}‎_p^*$ ‎, and an element‎ $y \in \mathbb{Z}‎_p^*$. Find integer $x, 0\leq x\leq p-2$‎‎, such that
\[‎
 g ^x=y \mod p‎.‎
\]
is called the discrete logarithm problem.
\end{definition}
\begin{fact}\label{fa2.6}
Suppose that $g$ is a generator of $\mathbb{Z}_p^*$. Then $b = g^i \mod p$ is also a generator of $\mathbb{Z}_p^*$ if and only if $\gcd(i, p-1)) = 1$.
\end{fact}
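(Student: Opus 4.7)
The plan is to compute the multiplicative order of $b = g^i$ inside $\mathbb{Z}_p^*$ and observe that $b$ generates $\mathbb{Z}_p^*$ if and only if its order equals $|\mathbb{Z}_p^*| = p-1$. Since $g$ is a generator, $g$ has order exactly $p-1$, so every element of $\mathbb{Z}_p^*$ can be written uniquely as $g^i$ with $0 \le i \le p-2$, and the question reduces to a purely arithmetic one about exponents modulo $p-1$.

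The key step I would carry out is the order formula: letting $d = \gcd(i, p-1)$, I claim that $b = g^i$ has order $(p-1)/d$. For the upper bound, observe that $b^{(p-1)/d} = g^{i(p-1)/d}$ and since $d \mid i$, the exponent $i(p-1)/d$ is a multiple of $p-1$, so $b^{(p-1)/d} \equiv 1 \pmod{p}$. For the lower bound, if $b^k \equiv 1 \pmod p$ then $g^{ik} \equiv 1$, so $(p-1) \mid ik$, which forces $(p-1)/d \mid k(i/d)$; since $\gcd(i/d,(p-1)/d) = 1$, we get $(p-1)/d \mid k$. Hence ${\rm ord}(b) = (p-1)/d$ exactly.

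Having established the order formula, the equivalence follows immediately: $b$ is a generator of $\mathbb{Z}_p^*$ iff ${\rm ord}(b) = p-1$ iff $(p-1)/\gcd(i,p-1) = p-1$ iff $\gcd(i,p-1) = 1$. This yields both directions of the stated equivalence in one line.

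There is no real obstacle here, as this is a standard fact about cyclic groups specialized to $\mathbb{Z}_p^*$. The only place one needs to be careful is the lower-bound part of the order computation, where the coprimality of $i/d$ and $(p-1)/d$ after dividing out $d$ must be used explicitly; skipping that step is the most common slip in writing out this argument.
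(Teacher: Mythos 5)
Your proof is correct: the order formula $\mathrm{ord}(g^i) = (p-1)/\gcd(i,p-1)$ is established cleanly in both directions, and the equivalence follows at once. Note that the paper itself offers no proof of this statement --- it is recorded as a \emph{Fact} (a standard result about cyclic groups, quotable from \cite{13}) and used later only to justify that $s_i = g^k \bmod p$ is a generator when $\gcd(k,p-1)=1$; your argument is exactly the standard one a textbook would supply, so there is nothing in the paper to compare it against and no gap to report.
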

\begin{definition}
A safe prime $p$ is a prime of the form $p = 2q + 1$ where $q$ is also prime.
\end{definition}
\subsection{Definitions}\label{de2.7}
\begin{definition} [{\bf Public-key encryption scheme}] A PKE scheme is a triple of probabilistic polynomial time
 (PPT) algorithms ({\sf Gen},\,{\sf Enc},\,{\sf Dec}) such that:
\begin{itemize}
\item {\sf Gen} is a probabilistic polynomial-time key generation algorithm which takes a security parameter $1^n$ as input and
outputs a public key $pk$ and a secret key $sk$. We write $(pk,sk) \leftarrow {\sf Gen}(1^n )$. The public key specifies the
message space ${\mathcal M}$ and the ciphertext space ${\mathcal C}$.
\item {\sf Enc} is a $($possibly$)$ probabilistic polynomial-time encryption algorithm which takes as input a public key $pk$, a
$m\in {\mathcal M}$ and random coins $r$, and outputs a ciphertext $C \in {\mathcal C}$. We write $C\leftarrow {\sf Enc}(pk,m; r)$ to
indicate explicitly that the random coins $r$ is used and $C \leftarrow {\sf Enc}(pk,m)$ if fresh random coins are used.
\item {\sf Dec} is a deterministic polynomial-time decryption algorithm which takes as input a secret-key $sk$ and a ciphertext
$C \in {\mathcal C}$, and outputs either a message $m \in {\mathcal M}$ or an error symbol $\bot$. We write $m \leftarrow{\sf Dec}(C,\,sk)$.
\item \textsf{Completeness}. For any pair of public and secret keys generated by {\sf Gen} and any message $m \in {\mathcal M}$ it holds that
${\sf Dec}(sk,\,{\sf Enc}(pk,m;r))=m$ with overwhelming probability over the randomness used by {\sf Gen} and the random coins \textit{r} used by {\sf Enc}.
\end{itemize}
\end{definition}
\begin{definition}[\textbf{Ciphertext-only attack}]
A ciphertext-only attack is a scenario by which the adversary ‎‎(or cryptanalyst) tries to deduce the decryption key by only observing the ciphertexts or decrypt a {\it challenge ciphertext}.‎
\begin{description}
  \item[Attacker knowledge] some $y_1= {\sf Enc}(x_1, pk)$, $y_2={\sf Enc}(x_2, pk)$, $\dots$.
  \item[Attacker goal] obtain $x_1, x_2, \dots$ or the secret-key $sk$.
\end{description}
Any encryption scheme vulnerable to this type of attacks is considered to be completely insecure.
\end{definition}
\begin{definition} [{\bf CCA2-security}]\label{de2d10}
A PKE scheme is secure against adaptive chosen-ciphertext attacks (i.e. IND-CCA2) if the advantage of any two-stage PPT
adversary ${\mathcal A} = ({\mathcal A}_1 ,\,{\mathcal A}_2 )$ in the following experiment is negligible in the security parameter $k$:
\begin{eqnarray*}
&&{\bf Exp}_{{\rm PKE}, {\mathcal A}}^{cca2}(k):\\
&& \quad\quad (pk, sk)\leftarrow {\sf Gen}(1^k)\\
&& \quad\quad (m_0, m_1, {\sf state})\leftarrow {\mathcal A}_{1}^{{\sf Dec}(sk, \cdot)} (pk) \quad {\rm s.t.} \quad |m_0|=|m_1|\\
&& \quad \quad b\leftarrow \{0, 1\}\\
&& \quad \quad C^{*}\leftarrow {\sf Enc}(pk, m_b)\\
&& \quad \quad b'\leftarrow {\mathcal A}_{2}^{{\sf Dec}(sk, \cdot)}(C^{*}, {\sf state})\\
&& \quad \quad  {\rm if} \ \ b= b^{'} \, \  {\rm return \ 1, \ else \ return \ 0}.
\end{eqnarray*}
The attacker may query a decryption oracle with a ciphertext $C$ at any point during its execution, with the exception that ${\mathcal A}_2 $ is not allowed to query
${\rm Dec} (sk,\,\cdot)$ with \textit{challenge ciphertext} $ C^{*} $. The decryption oracle returns $ b' \leftarrow {\mathcal A}_2^{{\rm Dec} (sk,\cdot)} (C^{*} ,{\sf state})$.
The attacker wins the game if $b = b'$ and the probability of this event is defined as $ \Pr [{\rm Exp} \,_{{\rm PKE} ,\,{\mathcal A}}^{cca2} \,(k)] $.
 We define the advantage of $ {\mathcal A}$ in the experiment as
\[
{\sf Adv}_{{\rm PKE} ,\,{\mathcal A}}^{\rm cca2} \,(k) = \left| \Pr [{\rm Exp} \,_{{\rm PKE} ,\,{\mathcal A}}^{cca2} \,(k) = 1] - \frac{1}{2} \right|.
\]
\end{definition}

\subsection{ElGamal Cryptosystem}\label{ssec2.4}
The ElGamal cryptosystem \cite{8} is a PKE scheme based on discrete logarithm problem (DLP) in $(\Bbb{Z}_p^{*}, \cdot)$. Let $p$ ba a large prime such that the DLP is infeasible in $(Z_p^{*}, \cdot)$, and let $g\in \Bbb{Z}_p^{*}$ be a primitive element. Each user selects a random integer $x$, $1\leq x\leq p-2$, and computes $y=g^x \mod p$. $(p, g, y)$ is the public key and $x$ is the secret key.\\
For encrypts a message, the sender randomly chooses integer $r, 1\leq r\leq p-2$ and computes $C_1=g^r,\, C_2=my^r$ and send $C=(C_1,C_2)$ to the receiver.
To recover message $m$ from ciphertext $C$, the receiver using private key $x$ computes $m=C_2(C_1^x)^{-1} \mod p$.

Altough the ElGamal scheme is randomized, but it not CCA2-secure. An attacker can pick a random number $r'$ and generate the ciphertext $C_1' = g^{r+r'}, \,C_2' = my^{r+r'}=mg^{x(r+r')}$, as the
values $g$ and $y$ are known from the public key. The attacker can then query for the decryption of this modified ciphertext and receive the message $m$ as answer.
\section{Akleylek et al. Cryptosystem}
In this section, we present Akleylek et al. cryptosystem \cite{2}. They wish to increase security of proposed cryptosystem by mixing the ElGamal cryptosystem with multiplicative knapsack problem.
\begin{enumerate}
\item \textbf{Key generation}
\begin{itemize}
  \item[(a)] We choose a super-increasing sequence $A=(a_1, \dots, a_n)$, such that $a_i > \sum_{i=1}^{j-1}a_i$, $2\leq j \leq n$, and all $a_i$'s are integer.
  \item[(b)] The keys of the ElGamal cryptosystem $(y, g, p, x)$ are calculated, where $y=g^x$.
  \item[(c)] For calculating public knapsack $B=(b_0, \dots, b_n)$, randomly select an integer $k$, $1\leq k\leq p-2$ and compute:
  \begin{eqnarray*}
  &&y=g^x \mod p, \quad s_i=g^k \mod p,\\
  && u_i=y^k.a_i \mod p, \quad b_i=(s_i, u_i) \quad {\rm for} \ \ \ 1\leq i\leq n.
\end{eqnarray*}
\end{itemize}
Finally, $B=(b_1, \dots, b_n)=((s_1, u_1), \dots, (s_n, u_n))$ is the public key and $(y, g, p, x, (a_1, \dots, a_n))$ is the secret key.
\item \textbf{Encryption}\\
To encrypt $n$ bit binary message $m=(m_1, \dots, m_n)$, we compute
\begin{equation}\label{e1}
C=(C_1, C_2)=\prod_{i=1}^{n}(s_i, u_i)^{m_i},
\end{equation}
and send ciphertext $C$ to the receiver.
\item \textbf{Decryption}\\
To decrypt the ciphertext $C$, the receiver firstly calculates
\begin{equation}\label{e2}
d=C_2.(C_1^x)^{-1} \mod p=\frac{\prod_{i=1}^n u_i^{m_i}}{\prod_{i=1}^n (s_i^x)^{m_i} }\mod p=\prod_{i=1}^n a_i^{m_i} \mod p.
\end{equation}
After calculating $d$, we must obtain plaintext $m=(m_1, \dots, m_n)$ from $d=a_1^{m_1}a_2^{m_2}\dots a_n^{m_n}$. Note that $u_i=y^k.a_i \mod p =g^{xk}.a_i \mod p=(s_i)^{x}.a_i \mod p$.
\end{enumerate}

\begin{remark}\label{re1}
\textit{ We stress that for the decryption algorithm works, we need to choose prime $p$ such that $p\geq \prod_{i=1}^n a_i$, which does not remark on the Akleylek et al.'s original paper. We illustrate this with an example in the next subsection.}
\end{remark}
\subsection{On the Completeness of the Akleylek et al.'s Cryptosystem}\label{ssec3.1}
The Akleylek et al.'s cryptosystem has some ambiguity. Completeness property for a PKE scheme (Definition \ref {de2.7}) guarantees that for any message $m \in {\mathcal M}$ it holds that ${\sf Dec}(sk,\,{\sf Enc}(pk,m))=m$. In the Akleylek et al.'s cryptosystem, after apply secret key we have $d=C_2.(C_1^x)^{-1} \mod p=\prod_{i=1}^n a_i^{m_i} \mod p$\,\footnote{As we mentioned in Remark \ref{re1}, we suppose that $p\geq \prod_{i=1}^n a_i$ and therefore $d=\prod_{i=1}^n a_i^{m_i} \mod p=\prod_{i=1}^n a_i^{m_i}$ and we have no problem for decrypting the input messages. See Example \ref{ex3.3} for more details.}, where $a_i, 1\leq i\leq n$ is a supper-increasing sequence. If the Hamming weight of the input message is small, for small $a_i$s we can efficiently retrieve the input messages but for large Hamming weight, $d$ is the product of the large subset of the $a_1, \dots, a_n$ and therefore it maybe impossible for the receiver to efficiently recovers $m_i$s from $d$. The main drawback is that the small $a_i$s maybe the divisors of the larger $a_i$s and therefore a ciphertext maybe does not decrypted {\it uniquely} and has several decryptions. A moment's reflection reveals that if we want any ciphertext decrypts uniquely, the $a_i$s must be \textit{pairwise primes}. Therefore, super-increasing assumption on the $a_i$s is not sufficient for completeness of the PKE scheme and their system does not satisfies completeness property. We illustrate our claims with a small example.

\begin{example} \label{ex3.2}
Suppose $(a_1,a_2,a_3,a_4,a_5)=(2,3,6, 12, 24)$ be a super-increasing sequence. Let $p = 2579$ and $g=2$,  where $g$ is a generator of $\mathbb{Z}_{2579}^{*}$. If we randomly choose $k=348$ and $x=1500$, then we have:
\begin{eqnarray*}
&&y=g^x \mod p=2^{1500} \mod 2579=862,\\
&& s_i=g^k \mod p=2^{348} \mod 2579=104,\\
&&u_1=y^k\times a_1\mod p =862^{348}\times 2 \mod 2579=2165,\\
&&u_2=1958, \quad u_3=1337, \quad u_4=95, \quad u_5=190.
\end{eqnarray*}

Suppose $(m_1,\dots,m_5)=(0,1,0,0,1)$ be an input message. For encrypts message $m$, we compute $C_1=s_2\times s_5=104^2 =10816$ and $C_2=u_2\times u_5=1958\times 190 =372020$. For decrypt ciphertext $C=(10816, 372020)$, receiver computes $d=372020\times(10816^{1500}\mod 2579)^{-1} \mod 2579=372020\times 2483\mod 2579=72$. Based on the super-increasing sequence $(2,3,6, 12, 24)$, we have: $72=\underbrace{3}_{a_2}\times \underbrace{24}_{a_5}=\underbrace{6}_{a_3}\times \underbrace{12}_{a_4}$ and therefor the input message $(0,1,0,0,1)$ has two decryptions: itself and $(0,0,1,1,0)$. Therefor, completeness does not holds for the Akleylek et al.'s encryption scheme.
\end{example}
As we mentioned in Remark \ref{re1}, if $p< \prod_{i=1}^n a_i$, then the decryption algorithm does not works properly. In the previous example, since $p>a_2\times a_5$, we have no problem for decryption of the input message.
\begin{example}\label{ex3.3}
Now, consider input message $m=(0,1,1,1,1)$. For encrypt message $m$, one computes $C_1=s_2\times s_3\times s_4\times s_5=104^4 =116985856$ and $C_2=u_2\times u_3\times u_4\times u_5=47252120300$. \sloppy For decrypt ciphertext $(116985856, 27107795330)$, receiver computes $d=27107795330\times(116985856^{1500}\mod 2579)^{-1} \mod 2579 =47252120300\times1479 \mod 2579=26\ne \underbrace{3}_{a_2}\times \underbrace{6}_{a_3}\times \underbrace{12}_{a_4}\times \underbrace{24}_{a_5}$. It is because $p=2579<a_2\times a_3\times a_4\times a_5=5184$.
\end{example}
Therefore in such cases, we cannot efficiently retrieve the input messages from the corresponding ciphertexts.
\section{Cryptanalysis of the Akleylek et al. Cryptosystem}
In this section, we propose our ciphertext-only attack against Akleylek et al.'s cryptosystem to recover message. We also show since
encryption algorithm of the system is deterministic, therefore cryptosystem is not chosen-ciphertext secure. As we previously mentioned, randomness
is the necessary property for CCA2 security, but it is not sufficient.
\subsection{Ciphertext-only attack}\label{ssec4.1}
In this subsection, we show that the Akleylek et al.'s cryptosystem is vulnerable to ciphertext-only attack. In other words, we can obtain message from challenge ciphertext.\\
Suppose $C=(C_1, C_2)$ be any challenge ciphertext which encrypted with this cryptosystem and we wish to find the corresponding message. From equation~\ref{e1}, we have
$C=(C_1, C_2)=\prod_{i=1}^{n}(s_i, u_i)^{m_i}=(s_1, u_1)^{m_1}(s_2, u_2)^{m_2}\dots (s_n, u_n)^{m_n} $. We note that the components $s_i=g^k \mod p$ of the public key are constant
respect to $i$ and we have
\begin{equation}\label{e3}
C_1=\prod_{i=1}^{n}s_i^{m_i}={s_i}^{m_1}\times \ldots \times {s_i}^{m_n}=\underbrace{{s_i}\times \ldots \times {s_i}}_{h-times}={s_i}^h,
\end{equation}
where $h=\sum_{i=1}^{n} m_i$ is the Hamming weight (the number of $m_i=1$) of the input message $m=(m_1, \dots, m_n)$. From equation~\ref{e3}, we can compute the Hamming weight $h$
of the message $m=(m_1,\dots, m_n)$, as the values $s_i$ and $C_1$ are known. Thus, we know the \textit{number} of the $m_i$s, where $m_i=1$. From equation~\ref{e1}, we have
\[
C_2=\prod_{i=1}^{n}u_i^{m_i}={u_1}^{m_1}\times \ldots \times {u_n}^{m_n},
\]
and therefore from $C_2$, we know the number of the $u_i$s where product of them equal to $C_2$, but we do not know which of them. For obtaining these $u_i$s, we need to find a $h$-tuple subset of the $(u_1, \dots, u_n)$ from public key $B=((*, u_1), \dots, (*, u_n))$ such that product of them equals to $C_2$. We denote this subset by $S$. One can chooses $h$ elements of $(u_1, \dots, u_n)$ in $n\choose h$ ways. Therefore, we need at most $n\choose h$ operations to find such subset. After obtaining these $u_i$s, we can obtain original message from the following equation
\[
m_i=\left\{ \begin{array}{ll}
 1 \quad {\rm if}\quad u_i\in S\\
 0\quad {\rm if}\quad  u_i\notin S
  \end{array} \right., \quad 1\leq i\leq n.
\]

\textsc{PROBABILITY OF SUCCESS:} For small $n$, we can efficiently compute ${n\choose h}$. For sufficiently large fixed integer $n$, we provide an upper bound for ${n\choose h}$.
\begin{lemma} Suppose that $h=\lambda n$ is an integer in the range $[0, n]$. Then
\[
{n\choose \lambda n}\leq 2^{nH(\lambda)},
\]
where $H(\lambda)=-\lambda\lg\lambda-(1-\lambda)\lg(1-\lambda)$ is the binary entropy function and $\lg$ is the binary logarithm.
\end{lemma}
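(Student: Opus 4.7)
The plan is to use the classical single-term estimate coming from the binomial expansion of $1 = (\lambda + (1-\lambda))^n$. This is the standard and shortest route to the entropy bound on a binomial coefficient, and it needs essentially no tools beyond the binomial theorem plus the definition of $H(\lambda)$.

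First I would dispose of the boundary cases $\lambda=0$ and $\lambda=1$ separately, under the usual convention $0\lg 0 = 0$. In both cases $\binom{n}{\lambda n} = 1$ and $2^{nH(\lambda)} = 2^0 = 1$, so the inequality holds with equality. For the remainder of the argument I would assume $0 < \lambda < 1$, so that $\lambda$ and $1-\lambda$ are both strictly positive and all logarithms that appear are finite.

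Next, the core step. Expand $(\lambda + (1-\lambda))^n = 1$ by the binomial theorem to obtain
\[
1 \;=\; \sum_{k=0}^{n} \binom{n}{k}\,\lambda^{k}(1-\lambda)^{n-k}.
\]
Since every summand is non-negative, each individual term is at most $1$. Specializing to $k = \lambda n$ (which by hypothesis is an integer in $[0,n]$) yields
\[
\binom{n}{\lambda n}\,\lambda^{\lambda n}(1-\lambda)^{(1-\lambda)n} \;\leq\; 1,
\]
so that
\[
\binom{n}{\lambda n} \;\leq\; \lambda^{-\lambda n}(1-\lambda)^{-(1-\lambda)n}.
\]

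Finally, I would take base-$2$ logarithms of the right-hand side to convert it into the entropy. Using $\lg$ for $\log_2$,
\[
\lg\!\left(\lambda^{-\lambda n}(1-\lambda)^{-(1-\lambda)n}\right)
= -\lambda n \lg \lambda - (1-\lambda)n \lg(1-\lambda)
= n\,H(\lambda),
\]
which gives $\binom{n}{\lambda n} \leq 2^{nH(\lambda)}$, as required. There is no real obstacle here; the only thing to be careful about is the boundary convention $0 \lg 0 = 0$, handled in the first step, so that $H(\lambda)$ is defined on the full interval $[0,1]$ and the inequality makes sense throughout the stated range of $h = \lambda n$.
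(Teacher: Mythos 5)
Your proof is correct and follows essentially the same route as the paper: handle the endpoints $\lambda=0,1$ trivially, bound the single term $\binom{n}{\lambda n}\lambda^{\lambda n}(1-\lambda)^{(1-\lambda)n}$ by the full binomial expansion of $(\lambda+(1-\lambda))^n=1$, and rewrite $\lambda^{-\lambda n}(1-\lambda)^{-(1-\lambda)n}$ as $2^{nH(\lambda)}$. No differences worth noting.
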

\begin{proof} The statement is trivial if $\lambda = 0$ or $\lambda = 1$, so assume that $0 < \lambda < 1$. To prove the upper bound, by the binomial theorem we have
\[
{n\choose \lambda n}\lambda^{\lambda n}(1-\lambda)^{(1-\lambda)n}\leq \sum_{k=0}^n {n\choose k}\lambda^{k}(1-\lambda)^{(n-k)}\leq (\lambda +(1-\lambda))^n=1.
\]
Hence,
\[
{n\choose \lambda n} \leq \lambda^{-\lambda n}(1-\lambda)^{-(1-\lambda)n}=2^{-\lambda n \lg\lambda}2^{-(1-\lambda)n\lg (1-\lambda)}=2^{nH(\lambda)}.
\]
\end{proof}
We show that the number of binary strings of length $n$ with Hamming weight $h=\lambda n$ is bounded by $2^{nH(h/n)}$. Thus, the running time of the proposed attack is ${\mathcal O}(2^{nH(h/n)})$, and depends on the value of $h$. For small and large $h$ i.e., for small and large $\lambda$, $H(\lambda)$ is small and we can efficiently compute ${n\choose h}$ for all $n$. Therefore, if the Hamming weight of the input message is either small or large, we can efficiently break the cryptosystem for all value of $n$. $H(\cdot)$ takes the maximum its value on $\lambda=1/2$, where $H(1/2)=1$. Thus, ${n\choose h}$ takes the maximum its value if $h=n/2$ and the running time of the attack is ${\mathcal O}(2^n)$. Therefore, if $n$ chosen enough large and the input message has Hamming weight close to $n/2$, then the proposed ciphertext-only attack seem does not works. But on the other hand, as we stated in subsection \ref{ssec3.1}, for large $n,h$, {\it completeness} is not holds for the encryption scheme. From equation \ref{e2}, we have $d=\prod_{i=1}^n a_i^{m_i}$. From Lemma~\ref{l1} and \cite{10,11,12,14}, when the $a_i$s are relatively prime, we can \textit{efficiently} calculate $m_i$s from $d$. In the Akleylek et al.'s cryptosystem, since the $a_i$s are super-increasing sequence and are not relatively prime, so small $a_i$s are the divisors of the larger $a_i$s. Thus, as we showed in the example \ref{ex3.2}, we cannot uniquely obtain $m_1, \dots, m_n$ from equation~\ref{e2}. Namely, the problem remains NP-complete and we cannot solve it, especially when $h, n$ is large, i.e., $d$ is the product of the large subset of the $(a_1, \dots, a_n)$.

As a result, for enough large $n$ we have three cases:
\begin{itemize}
  \item[(a)] Input messages with {\it small} hamming weight. In theses cases, we can efficiently compute ${n\choose h}$ and therefore we can apply proposed ciphertext-only attack in polynomial time.
  \item[(b)] Input messages with {\it medium} hamming weight, i.e., $h$ is close to $n/2$. In theses cases, ${n\choose h}$ takes the maximum its value and if $n$ chosen enough large, we cannot efficiently compute it. In such cases, the system has ambiguity and completeness does not holds. Therefore, encryption scheme is not usable.
  \item[(c)] Input messages with {\it large} hamming weight. In theses cases, we can efficiently compute ${n\choose h}$, however, such as previous case, completeness does not holds.
\end{itemize}

\subsection{Chosen ciphertext security}
As we previously stated in the introduction section, the Akleylek et al.'s PKE scheme is deterministic and therefore does not satisfies CCA2 security conditions. Following definition \ref{de2d10}, in the CCA2 security experiment, the challenger runs the key generation algorithm and gives the public key $pk$ to the adversary. The adversary chooses two messages $m_0, m_1$ with $|m_0|= |m_1|$ and gives it to the challenger. The challenger chooses $b \in \{ 0,1\}$ at random and encrypts $m_b$, obtaining the challenge ciphertext $C^*  = {\sf Enc}_{pk}(m_b)$ and gives it to the adversary. Since the encryption algorithm is deterministic, thus each message has one preimage. Therefore, CCA2 adversary simply can compute encryption of $m_0$ with public key $pk$, namely $C={\sf Enc}_{pk}(m_0)$, and then compare it with the challenge ciphertext. If they are equal then $m_b=m_0$, otherwise $m_b=m_1$.

We summarize the results in the following table.
\begin{center}
{\small {\bf Table 2.} Security and Efficiency Analysis of the Akleylek et al.'s Cryptosystem}
\end{center}
\begin{tabular}{|c|c|c|c|}
  \hline
  \quad \textbf{Input Message}\ \ \ \   &\textbf{Proposed Attack} & \textbf{Efficiency}& \textbf{Security} \\
  \hline
  \textsf{\small With small hamming weight}& \textsf{\small Ciphertext-only attack}& ---& \textsf{\small Not secure}\\
  \hline
  \textsf{\small With medium hamming weight}& \textsf{\small Ciphertext-only attack}& $?\,^{1}$& $\approx{\mathcal O}(2^n)$ \\
  \hline
  \textsf{\small With large hamming weight}& \textsf{\small Ciphertext-only attack}& ---& \textsf{\small Not secure}\\
  \hline
  \textsf{\small Any input message}& \textsf{\small CCA attack}& ---& \textsf{\small Not secure}\\
  \hline
\end{tabular}

${1}$. Completeness does not holds.
\section{Modified Cryptosystem}
In this section, we propose our modified encryption scheme based on the Akleylek et al.'s construction.

{$\bullet$ \textbf{Key generation.}  On security parameter $n$, key generator algorithm ${\sf Gen}(1^n)$:
\begin{itemize}
    \item[(a)] Randomly chooses $n$ primes $p_i$ and \textit{safe prime} $p=2q+1$ such that $p > \prod_{i=1}^{n}p_i$. It is clear that $|p|\gg n$.
    \item[(b)] Randomly chooses integers $x, k$ such that $1 < x, k< p-2$ and $\gcd(k, p-1)=1$. Computes
    \end{itemize}
    \begin{eqnarray*}
    y&=&g^x \mod p,\\
    s_i&=&g^k \mod p,\\
    u_i&=&y^k.p_i \mod p,
    \end{eqnarray*}
  and $b_i=(s_i, u_i)$ for $1\leq i\leq n$. Outputs $(n, p, (b_1, \dots, b_n))$ as the public key and $(y, g, x, k,(p_1, \dots, p_n))$ as the private key.
\begin{remark}\label{re5.1}
Note that since $\gcd(k, p-1)=1$, from fact \ref{fa2.6}, $s_i=g^k \mod p$ also is a generator.
\end{remark}
$\bullet$ \textbf{Encryption.} On inputs $m\in \mathbb{Z}_p^*,\, pk$, encryption algorithm {\sf Enc}:
  \begin{enumerate}
  \item[(a)] Uniformly chooses $n$-bit integer $r=(r_1, \dots, r_n)\in \{0,1\}^n$ with $r\ne 0,1$ at random and computes $h=\sum_{i=1}^n r_i$.
  \item[(b)] If $r$ is even then $r'\leftarrow r+1$, else $r'\leftarrow r$.
  \item[(c)] Computes
  \begin{equation}\label{e4}
  C_1=(C_1', C_1'')=\prod_{i=1}^{n} (s_i, u_i)^{r_i} \mod p \quad \textrm{and} \quad C_2=(m+h)^{r'} \mod p,
  \end{equation}
  and outputs $(C_1, C_2)$.
  \end{enumerate}
  It is obviously clear that the modified scheme is chosen-plaintext secure. Each message has $2^n$ corresponding ciphertext, and therefore, the probability of \textit{distinguish}
  between two message is $2^{-n}$ which is negligible.

$\bullet$ \textbf{Decryption.} In the decryption phase, firstly we recover randomness $r'$ was used for encrypts message $m$ from $C_1$. Then $r'$ used to recover message $m$ from $C_2$. It is clear that for correctly recover message $m$, we must recover exact the same randomness $r$ from $C_1$. To recover message $m$ from $(C_1, C_2)$, decryption algorithm {\sf Dec} performs as follows:
  \begin{enumerate}
    \item[(a)] Computes
  \[
   \hat{d}=C_1''.(C_1'^x)^{-1} \mod p= \frac{\prod_{i=1}^{n}u_i^{\hat{r}_i}}{\prod_{i=1}^{n}(s_i^x)^{\hat{r}_i}}\mod p=\prod_{i=1}^n p_i^{\hat{r}_i} \mod p.
  \]
   \item[(b)] Since $p > \prod_{i=1}^n p_i$ and $\hat{r}_i\in \{0, 1\}$, hence $\prod_{i=1}^{n}p_i^{\hat{r}_i}\mod p=\prod_{i=1}^{n} p_i^{\hat{r}_i}$ and so we have
  \[\hat{d}=\prod_{i=1}^{n} p_i^{\hat{r}_i}.
  \]
    Since $\hat{r}_i\in \{0, 1\}$, then $\hat{d}$ is the product of some  distinct  primes $p_i$. By Lemma~\ref{l1}, we conclude that
  \[
        \hat{r}_i=\left\{ \begin{array}{ll}
         1 \quad \textrm{if} \ \ \ p_i\mid d\\
         0 \quad \textrm{if} \ \ \ p_i\nmid d
        \end{array}\right., \quad 1\leq i\leq n.
  \]
  \item[(c)] With retrieved randomness $\hat{r}=(\hat{r}_1, \dots, \hat{r}_n)$ and secret key $(y,k, (p_1, \dots, p_n))$, computes $\hat{h}=\sum_{i=1}^n \hat{r}_i$ and checks wether
  \begin{equation}\label{eq5.2}
   C_1''\stackrel{?}{=}y^{k\hat{h}}\prod_{i=1}^n p_i^{\hat{r}_i} \mod p
  \end{equation}
  holds (\textit{consistency} checking) and rejects the ciphertext if not. If it holds then $r\leftarrow \hat{r}$ and $h\leftarrow \hat{h}$. Note that $C_1''=\prod_{i=1}^n u_i^{r_i} \mod p=y^{kh}\prod_{i=1}^n p_i^{r_i} \mod p$.
  \item[(d)] If $r$ is even then $r'\leftarrow r+1$, else $r'\leftarrow r$.
  \item[(e)] Finds integer $w, 1\leq w\leq p-2$ such that $w\cdot r'=1 \mod p-1$.\,\footnote{Since $|r'|=n\leq|p|$, thus $r'<p $. $r'$ is odd and $p-1=2q$ is even and has two divisor $(2,q)$, therefore, $\gcd(r', p-1)=1$ and $r'$ has multiplicative inverse modulo $p-1$.}
  \item[(f)] Computes $\hat{m}=((C_2)^w \mod p)-h$.
  \item[(g)] Checks wether
  \begin{equation}\label{eq5.3}
  C_2\stackrel{?}{=}(\hat{m}+h)^{r'} \mod p
  \end{equation}
   holds (consistency checking) and rejects the ciphertext if not. If it holds then outputs $m=\hat{m}$.
  \end{enumerate}

\subsection{\bf Security analysis}
\subsubsection{\textsc{Provable Security}}
The basic idea of provable security theory \cite{9} is to reduce the security of a PKE scheme under some attack model to a mathematically hard problem i.e., integer factorization,
discrete logarithm problems and NP-complete problem such as knapsack problem. Provable security has been widely accepted as a standard method for analyzing the security of cryptosystems. Such as original Akleylek et al.'s scheme and previous knapsack-based PKE schemes \cite{5,12,14,15}, we fail to obtain any security proof. In this subsection we nonetheless recall certain security-related facts for the clarity of this paper.
\begin{proposition} If the discrete logarithm problem (DLP) can be computed very efficiently, then the proposed system is not secure.
\end{proposition}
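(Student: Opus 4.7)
The plan is to turn any polynomial-time solver for the DLP in $\mathbb{Z}_p^*$ into an efficient IND-CPA distinguisher against the modified scheme; since IND-CPA is a necessary condition for the CCA2 guarantee targeted in Section~5, this would establish insecurity.

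First I would exploit that $p = 2q+1$ is a safe prime to compute, publicly and in polynomial time, a generator $g^*$ of $\mathbb{Z}_p^*$ (test small bases against the known factorization $\{2,q\}$ of $p-1$). Given a challenge ciphertext $(C_1', C_1'', C_2)$ on one of two chosen messages $m_0, m_1$, the distinguisher then performs the following DLP-based extraction. (i) Compute $\alpha = \log_{g^*} s$ and $\gamma_1 = \log_{g^*} C_1'$; since $s$ is itself a generator by Remark~\ref{re5.1}, $\gcd(\alpha, p-1) = 1$, so the Hamming weight $h \equiv \gamma_1 \alpha^{-1} \pmod{p-1}$ is recovered exactly as an integer in $[0, n]$. (ii) Compute $\lambda = \log_{g^*} C_2$ and, for each $b \in \{0,1\}$, $\rho_b = \log_{g^*}(m_b + h)$. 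Because $C_2 = (m_{b^*} + h)^{r'} \bmod p$ for the true bit $b^*$, the candidate $r'_b \equiv \lambda \rho_b^{-1} \pmod{p-1}$ equals $r'$ whenever $b = b^*$ and $\gcd(\rho_{b^*}, p-1) = 1$. (iii) For each $b$, check that $r'_b$ is an odd integer in $[3, 2^n - 1]$, extract the candidate bit-vector $r$ (either $r'_b$ or $r'_b - 1$, per the parity rule), and verify $\sum_i r_i = h$ together with $\prod_i u_i^{r_i} = C_1''$. Output the unique $b$ surviving all checks.

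The main obstacle is the edge-case analysis ensuring that the incorrect bit is rejected with overwhelming probability. Because $p - 1 = 2q$ has only the prime divisors $2$ and $q$, the failure event $\gcd(\rho_b, p-1) > 1$ occurs only when $\rho_b$ is even or divisible by $q$ --- each a negligible fraction over random $m_b$. Conversely, when $b \neq b^*$ the candidate $r'_{1-b^*} = r' \rho_{b^*} \rho_{1-b^*}^{-1} \bmod (p-1)$ is statistically close to uniform on $\mathbb{Z}_{p-1}$, so the probability it lies in the $n$-bit odd range is $O(2^n/p)$, which is negligible since $|p| \gg n$; the knapsack consistency check $\prod u_i^{r_i} \stackrel{?}{=} C_1''$ adds a further filter of strength $1/(p-1)$. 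Combining these bounds yields an IND-CPA distinguishing advantage of $1 - o(1)$, contradicting security whenever an efficient DLP solver exists, which is what the proposition asserts.
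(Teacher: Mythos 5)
Your proposal is correct but takes a genuinely different route from the paper. The paper's proof uses the DLP solver only once: since all $s_i$ equal the single generator $s=g^k$ (which is public as part of $b_i$), it computes the Hamming weight $h$ from $C_1'=s^h \bmod p$, and then falls back on the combinatorial $\binom{n}{h}$ subset-product search of Section~4 to recover $r$ from $C_1''$, finally inverting $C_2$. Consequently the paper's attack is only effective when $h$ is small or large, and the paper explicitly concedes that the scheme is ``not completely'' broken for medium Hamming weight. You recover $h$ the same way, but then apply the DLP a second time to $C_2$ against the two candidate plaintexts, turning the attack into an IND-CPA distinguisher that works for \emph{every} Hamming weight and avoids the exponential subset search entirely --- a strictly stronger conclusion (at the cost of being a distinguishing attack rather than a message-recovery attack, which is still a perfectly valid reading of ``not secure'' given the paper's IND-CCA2 framing). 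One caveat: your claim that the wrong candidate $r'_{1-b^*}$ is ``statistically close to uniform on $\mathbb{Z}_{p-1}$'' is only heuristic --- $m_0,m_1$ are chosen by the distinguisher itself and $r'$ ranges over only about $2^{n-1}$ values, far fewer than $p-1$, so the image of $r'\rho_{b^*}\rho_{1-b^*}^{-1} \bmod (p-1)$ is a sparse, structured set rather than a near-uniform one. This does not sink the argument, because the decisive filter is the knapsack consistency check $\prod_i u_i^{r_i} \stackrel{?}{=} C_1''$ together with $\sum_i r_i = h$: a false candidate passing it would exhibit a collision in the multiplicative knapsack, but the range/uniformity estimate as stated should be replaced by (or explicitly subordinated to) that final check.
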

\begin{proof}
First note that even the DLP is computable, we cannot compute $x,k$ from $s_i=g^k \mod p, \,y=g^x \mod p$ and  $u_i=y^k\cdot p_i \mod p$, since $(y, g, x, k,(p_1, \dots, p_n))$ is secret.\\
In the modified cryptosystem, we have
\[
C_1'=\prod_{i=1}^n s_i^{r_i}\mod p=s_i^{\sum_{i=1}^n r_i}\mod p=s_i^h\mod p,
\]
where $h=\sum_{i=1}^n r_i$ and $s_i=g^k \mod p$ is a generator of $\mathbb{Z}_p^*$. If the DLP is computable, then we can determine Hamming weight $h$ from $C_1'=S_i^h \mod p$. According to the discussion in subsection  \ref{ssec4.1}, then the modified scheme is vulnerable to ciphertext-only attack if $n,h$ are small or $h$ is large. In such cases, the adversary can retrieve randomness $r$ from $C_1=(C_1', C_1'')$ and then recover $m$ from $C_2=(m+h)^{r'} \mod p$. Even if the DLP is computable, then the proposed scheme is not completely breaks. The ciphertext-only attack will works for small $(n,h)$ and large $h$. It cannot not break system for large $n$ with medium Hamming weight.
\end{proof}
\begin{proposition} If a certain special knapsack-type problem can be solved
very efficiently, then the proposed system is not secure.
\end{proposition}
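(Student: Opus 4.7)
The plan is to identify the underlying knapsack-type problem whose hardness the modified scheme relies on, and then show that an efficient solver for it lets an adversary recover both the encryption randomness and the plaintext. Concretely, I would work with the following multiplicative subset-product problem over $\mathbb{Z}_p^*$: given the public values $(u_1, \ldots, u_n)$ and a target $t \in \mathbb{Z}_p^*$, find a vector $(r_1, \ldots, r_n) \in \{0,1\}^n$ such that
\[
t \equiv \prod_{i=1}^{n} u_i^{r_i} \pmod{p}.
\]
This is a special case of the NP-complete subset product problem. Crucially, Lemma~\ref{l1} does not resolve it, because each $u_i = y^k \cdot p_i \mod p$ carries the same factor $y^k$, so the $u_i$'s are in general not pairwise coprime modulo $p$ and the greedy $\gcd$-based recovery fails.

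Assuming an oracle $\mathcal{O}$ that efficiently solves this problem, the attack on a challenge ciphertext $(C_1, C_2) = ((C_1', C_1''), C_2)$ would proceed as follows. First, query $\mathcal{O}$ on input $(u_1, \ldots, u_n, C_1'')$ to obtain $\hat{r} = (\hat{r}_1, \ldots, \hat{r}_n) \in \{0,1\}^n$ with $C_1'' \equiv \prod_i u_i^{\hat{r}_i} \pmod{p}$; since the sender computed $C_1''$ with exactly this formula from the actual randomness $r$, either $\hat{r}=r$, or, should the oracle return several candidates, the adversary enumerates them and keeps the one that passes the consistency check of equation~\ref{eq5.2}. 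Next, run the legitimate decryption procedure with $\hat{r}$: compute $\hat{h} = \sum_i \hat{r}_i$, adjust $\hat{r}$ to $\hat{r}'$ by the parity rule of step (d), find the inverse $w$ of $\hat{r}'$ modulo $p-1$ (which exists because $p=2q+1$ is a safe prime and $\hat{r}'$ is odd), and output $\hat{m} = (C_2)^w - \hat{h} \mod p$. Correctness of the decryption algorithm then forces $\hat{m}=m$, so the scheme is broken.

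The step I expect to require the most care is not any calculation but the precise formulation of the hardness assumption. The natural temptation is to posit that the subset-product problem on the secret primes $(p_1, \ldots, p_n)$ must be hard, but Lemma~\ref{l1} already solves that instance in polynomial time; the real source of security is the apparently harder instance over the public $u_i$'s, whose common factor $y^k$ destroys the pairwise coprimality needed to invoke Lemma~\ref{l1}. Once the correct special problem is isolated, the remainder of the attack is a direct re-execution of the decryption algorithm and introduces no new ingredients.
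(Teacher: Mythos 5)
Your proposal identifies exactly the knapsack-type problem the paper uses (finding $\hat r\in\{0,1\}^n$ with $C_1''\equiv\prod_i u_i^{\hat r_i}\pmod p$) and then recovers $m$ by re-running decryption with the recovered randomness, which is precisely the paper's argument, stated in somewhat more detail. Your closing remark about why Lemma~\ref{l1} does not apply to the $u_i$'s matches the paper's own note that the congruence is a ``disguised'' version of the easy subset-product problem on the secret primes, so the two proofs coincide in approach.
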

\begin{proof} Given $p, u_1, \dots, u_n$ and a ciphertext $C_1=(C_1',C_1'')$, we want to find a
subset $T$ of $\{1, \dots,n\}$ such that
\begin{equation}\label{eq5.4}
\prod_{i\in T} u_i \mod p= C_1''.
\end{equation}
Then we can immediately recover randomness $r$ from $C_1''$ and then compute message $m$ from $C_2=(m+h)^{r'} \mod p$. Finding such a subset $T$ is a kind of knapsack problem.
\end{proof}
Note congruence \ref{eq5.4} is a disguised version of the easy knapsack-type problem of
finding a subset $T$ of $\{1,\dots, n\}$ such that
\[
\prod_{i\in T} p_i \mod p =C_1''.(C_1'^x)^{-1} \mod p,
\]
which we solve by computing $\gcd((C_1''.(C_1'^x)^{-1} \mod p ),p_i)$ for $i = 1, 2,\dots$ .

\textsc{Birthday Attack.} If prime $p$ is chosen too small, then from inequality $p > \prod_{i=0}^n p_i$, it follows that  $n$ is small. Hence $p$ must be sufficiently large to prevent birthday-search through two lists $A$ and $B$ of $2^{n/2}$ elements to find a couple of sets such that:
\[
 \prod_{i\in A} u_i=(\prod_{i\in B} u_i)^{-1}.C_1'' \mod p.
 \]
Therefore $n$ must be chosen such that the adversary's running time is significantly smaller than $2^{n/2}$ steps.
\subsubsection{\textsc{CCA2 Security}}
In this subsection, we show that the modified scheme satisfies CCA2 security. As we showed in subsection \ref{ssec2.4}, the ElGamal system is not CCA2-secure. It is because values $g, y$ are public. Unlike the ElGamal system, in the modified system values $(y, g, x, k,(p_1, \dots, p_n))$ are secret and we cannot perform any modification to the $(C_1',C_1'')$ in order to retrieve randomness $r$. Even if we can perform any modifications to the challenge ciphertext, then the maliciously-formed ciphertexts will be rejected in the scheme's consistency checking step in (\ref{eq5.2}). If we can retrieve randomness $r$, then we can simply recover message $m$.
\begin{theorem}\label{tm5.4}
If the mixed ElGamal-Knapsack encryption scheme is secure, then the modified PKE scheme satisfies CCA2 security in the standard model.
\end{theorem}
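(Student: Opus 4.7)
I would proceed by a black-box reduction: assuming a PPT adversary $\mathcal{A} = (\mathcal{A}_1, \mathcal{A}_2)$ achieves non-negligible CCA2 advantage against the modified scheme, I would build a PPT adversary $\mathcal{B}$ that breaks the confidentiality of the underlying mixed ElGamal--Knapsack primitive (i.e., the part of the ciphertext that encapsulates the randomness $r$ via $C_1=(C_1', C_1'')$). The high-level structure mirrors the hybrid-argument template for hybrid (KEM/DEM-style) encryption: the randomness $r$ is hidden by the mixed ElGamal--Knapsack encapsulation, and $C_2=(m+h)^{r'} \bmod p$ serves as a one-time pad-like wrapping whose inversion requires $r'$ (hence $r$).

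First I would have $\mathcal{B}$ accept its own public key from the underlying-scheme challenger, lift it to the full public key $(n,p,(b_1,\dots,b_n))$ expected by $\mathcal{A}$, and forward it. When $\mathcal{A}_1$ outputs $(m_0, m_1)$, $\mathcal{B}$ picks a bit $b \in \{0,1\}$ and a pair of fresh randomness strings $r^{(0)}, r^{(1)} \in \{0,1\}^n$, submits them to its own challenger to obtain a challenge $C_1^*$ encapsulating $r^{(b^*)}$ for an unknown $b^*$, and then forms $C_2^* = (m_b + h^{(b^*)})^{r'^{(b^*)}} \bmod p$ using the Hamming weight of whichever $r^{(b^*)}$ it is trying to guess. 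After $\mathcal{A}_2$ returns $b'$, $\mathcal{B}$ outputs $0$ iff $b'=b$. A standard game-hop argument then shows that $\mathcal{A}$'s advantage against the modified scheme translates, up to negligible loss, into $\mathcal{B}$'s advantage against the mixed primitive.

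The decisive and most delicate step is the simulation of the decryption oracle. Here I would lean on the consistency check (\ref{eq5.2}): any ciphertext $(C_1, C_2)$ that is not of the honest form $C_1''=y^{kh}\prod p_i^{r_i} \bmod p$ is rejected, so the simulator only needs to decrypt ciphertexts whose underlying $r$ is consistent with $C_1'$. Combined with (\ref{eq5.3}), which binds $C_2$ to the same Hamming weight $h$ and to a valid $r'$-th root, this should force any $\mathcal{A}$-query derived from $(C_1^*, C_2^*)$ by mauling either to be trivially related (and thus disallowed) or to fail verification with overwhelming probability, thanks to Remark~\ref{re5.1} which ensures $s_i$ is a generator and Fact~\ref{fa2.6} pinning down the exponents.

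The hard part, and where I expect the main obstacle to sit, is making the oracle-simulation argument fully rigorous \emph{in the standard model} without secretly invoking a random oracle or a plaintext-awareness assumption. Concretely, one must show that passing the check (\ref{eq5.2}) on a ciphertext not produced by the honest encryption algorithm implies either knowledge of $x$ and $k$ or a non-trivial relation among the $u_i$'s and $p_i$'s that contradicts the assumed security of the mixed ElGamal--Knapsack primitive. A clean way to discharge this would be to formulate the hypothesis of the theorem as a joint one-way/non-malleability property of the encapsulation $C_1$, and then reduce forgery of a consistency-passing mauled ciphertext to a violation of that property; otherwise the reduction leaks the secret key $(x,k,(p_1,\dots,p_n))$ through the simulated oracle and the proof collapses.
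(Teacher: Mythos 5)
Your proposal takes a genuinely different route from the paper, and the comparison is instructive. The paper does not construct a reduction at all: its ``challenger'' runs key generation itself, retains the secret key, answers decryption queries honestly, and the argument consists of two informal cases showing that a mauled ciphertext with $C_1=C_1^*$, $C_2\ne C_2^*$ or with $C_1\ne C_1^*$, $C_2=C_2^*$ fails the consistency checks (\ref{eq5.2}) and (\ref{eq5.3}) and is therefore rejected. It never treats the general case $C_1\ne C_1^*$ and $C_2\ne C_2^*$, never quantifies over all PPT adversaries, and never actually uses the hypothesis that the mixed ElGamal--Knapsack scheme is secure in any reduction; the rejection claimed in its Case 2 also rests on the unjustified assertion that a decapsulated $r\ne r^*$ cannot produce a $C_2$ passing (\ref{eq5.3}). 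Your black-box reduction template is the methodologically correct way to prove a statement of this conditional form, and you correctly locate the crux exactly where the paper is silent: simulating the decryption oracle in the standard model without the secret key.

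That said, your sketch has two concrete gaps of its own. First, there is a circularity in the construction of the challenge: $\mathcal{B}$ receives $C_1^*$ encapsulating $r^{(b^*)}$ for an unknown $b^*$, yet must compute $C_2^*=(m_b+h^{(b^*)})^{r'^{(b^*)}}\bmod p$, which requires knowing the exponent $r'^{(b^*)}$ itself, not merely some derived indistinguishable key; the KEM/DEM template does not apply cleanly here because the ``DEM'' uses the encapsulated randomness in the clear rather than a hashed or extracted key. Second, as you yourself flag, the oracle simulation is never discharged: $\mathcal{B}$ holds no secret key, the checks (\ref{eq5.2}) and (\ref{eq5.3}) themselves require $x$, $k$ and the $p_i$, and nothing in the stated hypothesis gives $\mathcal{B}$ a way to answer even honestly generated decryption queries. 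Without an additional plaintext-awareness or non-malleability assumption on the encapsulation (which the theorem does not supply), the reduction cannot be completed. So your plan, while sounder in form than the paper's argument, does not yet constitute a proof --- and, in fairness, neither does the paper's.
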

\begin{proof} In the proof of security, we exploit the fact that for a well-formed ciphertext,‎ we can recover the message if we know the \textit{randomness} $r$ that was used to create
the ciphertext.‎\\
In the  CCA2 experiment (Definition \ref{de2d10}), the challenger runs the key generation algorithm and gives the public key $pk$ to the adversary.\\
\textsf{Challenge Ciphertext}. The adversary chooses two messages $m_0, m_1$ with $|m_0|= |m_1|$ and gives it to the challenger. The challenger chooses $b \in \{ 0,1\}$ at random,
randomness $r^*$ and encrypts $m_b$, obtaining the challenge ciphertext $C^*=(C_1^*,C_2^* )$, where $C_1^*=\prod_{i=1}^{n} (s_i, u_i)^{r_i^*} \mod p$
and $C_2^*= (m_b+h^*)^{r'^*} \mod p$ and gives it to the adversary, where $h^*$ is the Hamming weight of the randomness $r^*$.
We denote by $r^{*}$ the corresponding intermediate quantity chosen by the challenger.

The challenger has to simulate the decryption oracle. The CCA2 adversary submits a request $C= (C_1, C_2)$ to the challenger, and it outputs decryption of the queried ciphertext to the adversary. He attempts to guess
the challenge bit $b$ based on the output of the challenger. In the CCA2 experiment, the adversary is not allowed to ask the decryption of
the challenge ciphertext, but can obtain the decryption of any modified ones based on the challenge
ciphertext.\\
To investigate CCA security experiment, we consider two potential cases chosen by the adversary for querying from the challenger. We also show
that any modification to the challenge ciphertext does not reveal any useful information about the challenge message $m_b$.\\
\textsf{Case 1:} $C_1=C_1^*$ and $C_2\ne C_2^*$. In this case, the adversary chooses $C_2$ at
random and queries on ciphertext $(C_1^*,C_2)$. The challenger takes as input $(C_1^*,C_2)$ and computes $r={\sf Dec}_{pk}(C_1^*)=r^*$, $h=h^*$ and
$r'=r'^*$. It also computes $\hat{m}=((C_2)^{{\sf Inv}(r'^*)} \mod p)-h^* \ne ((C_2^*)^{{\sf Inv}(r'^*)}\mod p)-h^*=m_b$,
where ${\sf Inv}(r')=(r')^{-1} \mod p-1$ is the multiplicative inverse of $r$. Since $C_2^*=(m_b+h^*)^{r'^*} \mod p\ne (\hat{m}+h^*)^{r'^*} \mod p$, thus the simulator rejects
the ciphertext in (\ref{eq5.3}).  Therefore, the system does not reveal any information about the challenge
message $m_b$, and so, advantage of the adversary to guess the challenge bit $b$ in this case is zero.\\
In this case, the adversary cannot perform any modification to $C_2$ based on $C_2^*$ in order to retrieve $m_b$,
since he does not know the internal random component $r^*$ was chosen by the challenger for encrypts $m_b$.

\textsf{Case 2:} $C_1\ne C_1^*$ and $C_2= C_2^*$. In this case, the adversary chooses $C_1$ at random and queries on ciphertext
$(C_1,C_2^*)$. The challenger takes as input $(C_1,C_2^*)$ and computes $r={\sf Dec}_{pk}(C_1)$. Since encryption algorithm of
$C_1$ is deterministic, therefore any randomness $r$ has \textit{one} preimage.
Thus if $C_1\ne C_1^*$, then $r={\sf Dec}_{pk}(C_1)\ne {\sf Dec}_{pk}(C_1^*)=r^*$. In the worst case, we assume $r$ and $r^*$
have the same Hamming weight, namely $h=h^*$. So, we have $\hat{m}=((C_2^*)^{{\sf Inv}(r')} \mod p)-h^*\ne ((C_2^*)^{{\sf Inv}(r'^*)}\mod p)-h^*=m_b$.
Hence, the simulator rejects the ciphertext in (\ref{eq5.3}), since $C_2^*=(m_b+h^*)^{r'^*} \mod p\ne (\hat{m}+h^*)^{r'} \mod p$. Therefore, the encryption scheme does not reveal any information about the challenge
message $m_b$ and so, advantage of the adversary to guess the challenge bit $b$ in this case is zero.\\
As shown in \cite{7,18}, in the knapsack-based PKE schemes, CCA2 adversary cannot efficiently produces legitimate
ciphertext based on $C_1^*$. As shown in \cite{18}, the probability of succeed adversary for retrieve $r$ with {\it one}
bit differ from $r'^*$ is $1/2n$ which is smaller than 1/2 (note in general, the probability of guessing $b$ is 1/2; $b=0$ or $b=1$).
We stress that even if the adversary can computes $r$ with probability greater than $1/2$, then since the retrieved
randomness $r$ is not equal to $r^*$ (differ from one bit), therefore $\hat{m}=((C_2^*)^{{\sf Inv}(r')}\mod p)-h^*$ is not
equal to $m_b$, where we assume $r$ and $r^*$ have the same Hamming weight. So, as we state above, the simulator will rejects the ciphertext in (\ref{eq5.3}).
\end{proof}
\section{Conclusion}
In this paper, we consider a knapsack-based PKE scheme mixed with the ElGamal cryptosystem. This cryptosystem uses the ElGamal system in the key generation stage to disguise the secure knapsack (super-increasing sequence) in order to produce the public knapsack. It uses subset product (multiplicative knapsack) problem as encryption function which is NP-complete
problem. We showed that this combination leaks the security and makes the cryptosystem vulnerable to ciphertext-only attack. In addition, since encryption algorithm for the mentioned scheme
is deterministic, therefore it does not satisfy CCA2 security requirements. Thus, the resulting encryption scheme is also vulnerable to man-in-the-middle attack, and therefore, the
scheme is not suitable to implement in a P2P network.
Besides, as we showed, completeness property does not holds for the system in the general.\\
We modified this cryptosystem to improve its security and efficiency. The modified scheme is CCA2-secure and the proposed ciphertext-only attack is not applicable. Completeness holds for all cases and each ciphertext decrypts uniquely.

\textbf{Acknowledgment}

We thank the anonymous referees for insightful comments.

\end{document}